\newtheorem{conjecture}{Conjecture}
\newtheorem{lemma}{Lemma}
\newtheorem{observation}{Observation}
\newtheorem{theorem}{Theorem}
\theoremstyle{definition}
\newtheorem{definition}{Definition}
\theoremstyle{remark}
\newtheorem*{remark}{Remark}
\DeclareMathOperator{\poly}{poly}
\DeclareMathOperator{\tr}{tr}
\DeclareMathOperator*{\E}{\mathbb E}
\DeclareMathOperator{\T}{\mathbb T}
\begin{document}

\title{Instability of localization in translation-invariant systems}

\begin{CJK*}{UTF8}{}

\CJKfamily{gbsn}

\author[1]{Yichen Huang (黄溢辰)\thanks{yichuang@mit.edu}}
\author[2]{Aram W. Harrow\thanks{aram@mit.edu}}
\affil[1]{Microsoft Research AI, Redmond, Washington 98052, USA}
\affil[1,2]{Center for Theoretical Physics, Massachusetts Institute of Technology, Cambridge, Massachusetts 02139, USA}

\maketitle

\end{CJK*}

\begin{abstract}

The phenomenon of localization is usually accompanied with the presence of quenched disorder. To what extent disorder is necessary for localization is a well-known open problem. In this paper, we prove the instability of localization in translation-invariant systems. For any translation-invariant local Hamiltonian exhibiting either Anderson or many-body localization, an arbitrarily small translation-invariant random local perturbation almost surely leads to the following manifestations of delocalization: (i) Transport: For any (inhomogeneous) initial state, the spatial distribution of energy or any other local conserved quantity becomes uniform at late times. (ii) Scrambling: The out-of-time-ordered correlator of any traceless local operators decays to zero at late times. (iii) Thermalization: Random product states locally thermalize to the infinite temperature state with overwhelming probability.

\end{abstract}

\section{Introduction}

Thermalization versus localization is a fundamental problem in quantum statistical mechanics. In the presence of quenched disorder, localization can occur not only in single-particle systems, but also in interacting many-body systems. The former is known as Anderson localization (AL) \cite{And58}, and the latter is called many-body localization (MBL) \cite{NH15, AV15, VM16, AP17, AL18, AABS19}.

In the past decade, significant progress has been made towards understanding AL and especially MBL. Characteristic features of localization include, but are not limited to,
\begin{itemize}
\item absence of transport and vanishing dc conductivity \cite{GMP05, BAA06};
\item slow growth of entanglement with time \cite{ZPP08, BPM12, VA13, SPA13U, VA14, ANSS16, Hua17};
\item area law for the entanglement of (almost) all eigenstates \cite{HNO+13, BN13, SPA13, HM14, AS15};
\item (quasi-)local integrals of motion \cite{SPA13, HNO14, RMS15, Imb16, IRS17};
\item intermediate- and late-time behavior of out-of-time-ordered correlators (OTOC) \cite{HZC17, FZSZ17, Che16, SC17, HL17, CZHF17}.
\end{itemize}

As a diagnostic of quantum chaos, OTOC describes information scrambling \cite{LO69, SS14, RSS15, SS15, HQRY16, SBSH16, RY17, ZHC19, Kit14, Kit15}. In MBL systems, it is well known that OTOC of two randomly selected traceless local operators almost surely decays to a non-zero value at late times. (``Almost surely'' means that an event occurs with probability $1$.) This has been confirmed numerically and can be understood from the perspective of (quasi-)local integrals of motion \cite{HZC17, FZSZ17, Che16, SC17, HL17, CZHF17}.

Since localization is usually accompanied with disorder, to what extent disorder is necessary for localization is a well-known open problem. While early works \cite{GF14, DH14A, SM14} suggested the possibility of localization in translation-invariant (TI) systems, it was later argued heuristically that TI may inevitably lead to delocalization \cite{DH14}. Numerical study \cite{SSM15, VLG15, YLC+16} of localization in TI systems appears to be challenging due to significant finite-size effects \cite{PSA15}. More recently, a line of works \cite{Hua15, SKKM17, SKMK17, BDHS18, SKMK18, SKMK19} constructed and studied exactly solvable TI models exhibiting localization, but these models are non-generic in the sense of being highly fine tuned. It was not clear whether their localization properties are robust against generic TI local perturbations.

In this paper, we prove the instability of localization in TI systems from the perspectives of transport, scrambling, and thermalization. For any TI local Hamiltonian exhibiting either AL or MBL, an arbitrarily small TI random local perturbation almost surely leads to the following manifestations of delocalization:
\begin{itemize}
\item Transport: For any (inhomogeneous) initial state, the spatial distribution of energy or any other local conserved quantity becomes uniform at late times.
\item Scrambling: OTOC of any traceless local operators decays to zero at late times.
\item Thermalization: Random product states locally thermalize to the infinite temperature state with overwhelming probability.
\end{itemize}

These results rule out the existence of (quasi-)local integrals of motion and support the viewpoint that quasi-MBL, proposed by Yao {\it et al.} \cite{YLC+16}, is the most localized stable phase in TI systems.

\section{Results}

Throughout this paper, asymptotic notations are used extensively. Let $f,g:\mathbb R^+\to\mathbb R^+$ be two functions. One writes $f(x)=O(g(x))$ if and only if there exist positive numbers $M,x_0$ such that $f(x)\le Mg(x)$ for all $x>x_0$; $f(x)=\Omega(g(x))$ if and only if there exist positive numbers $M,x_0$ such that $f(x)\ge Mg(x)$ for all $x>x_0$; $f(x)=\Theta(g(x))$ if and only if there exist positive numbers $M_1,M_2,x_0$ such that $M_1g(x)\le f(x)\le M_2g(x)$ for all $x>x_0$.

For notational simplicity and without loss of generality, we present the results in one dimension. (It is easy to see that the same results hold in higher spatial dimensions.) Consider a chain of $n$ spins so that the dimension of the Hilbert space is $D=d^n$, where $d=\Theta(1)$ is the local dimension of each spin.

Suppose that $H_{\rm loc}$ is a TI $k$-local Hamiltonian. TI implies periodic boundary conditions, and ``$k$-local'' means that the support of each term in $H_{\rm loc}$ is contained in a consecutive region of size $k$. (For example, a term acting nontrivially only on the first and third spins is $3$- rather than $2$-local.) We assume that $k\ge2$, for any $1$-local Hamiltonian can be artificially regarded as $2$-local. $H_{\rm loc}$ may, but does not have to, exhibit either AL or MBL. Let $h$ be a random Hermitian operator acting on a particular consecutive region of $k$ spins so that the matrix representation of $h$ is of size $d^k\times d^k$. Let $\T$ be the (unitary) lattice translation operator, which acts on the computational basis states as
\begin{equation}
\T(|x_1\rangle\otimes|x_2\rangle\otimes\cdots\otimes|x_n\rangle)=|x_n\rangle\otimes|x_1\rangle\otimes\cdots\otimes|x_{n-1}\rangle,\quad x_l\in\{0,1,\ldots,d-1\},\quad l=1,2,\ldots,n.
\end{equation}
The TI local perturbation is
\begin{equation}
H_{\rm per}=c\sum_{l=0}^{n-1}\T^lh\T^{-l},
\end{equation}
where $c>0$ is an arbitrarily small number. We write the perturbed Hamiltonian as
\begin{equation}
H:=H_{\rm loc}+H_{\rm per},\quad H=\sum_{l=1}^{n}H_l,
\end{equation}
where $H_l=\T^{l-1}H_1\T^{-(l-1)}$ is a local term in $H$.

\begin{definition} [non-degenerate spectrum]
The spectrum of a Hamiltonian is non-degenerate if all eigenvalues are distinct.
\end{definition}

\begin{definition} [non-degenerate gaps]
The spectrum $\{E_j\}$ of a Hamiltonian has non-degenerate gaps if the differences $\{E_j-E_k\}_{j\neq k}$ between different eigenvalues are distinct, i.e., for any $j\neq k$,
\begin{equation}
E_j-E_k=E_{j'}-E_{k'}\implies(j=j')~{\rm and}~(k=k').
\end{equation}
\end{definition}

Since $H_{\rm per}$ is a random perturbation, we expect that the spectrum of $H$ is non-degenerate and has non-degenerate gaps almost surely. Intuitively, for any $k\ge2$, the set of TI $k$-local Hamiltonians whose spectrum is degenerate or has degenerate gaps should be of measure zero. However, proving this statement is not easy. In this direction, Keating {\it et al.} \cite{KLW15} obtained some rigorous results, which we review and extend in Appendix \ref{app}.

Let $\rho(t):=e^{-iHt}\rho e^{iHt}$ be the time evolution of a density operator.

\begin{observation} [transport] \label{trivial}
For any TI Hamiltonian $H=\sum_{l=1}^nH_l$ with non-degenerate spectrum and any (inhomogeneous) initial state $\rho$,
\begin{equation}
\lim_{\tau\to\infty}\frac{1}{\tau}\int_0^\tau\tr(\rho(t)H_l)\,\mathrm dt=\tr(\rho H)/n,\quad\forall l.
\end{equation}
In this sense, every term $H_l$ has the same amount of energy at late times. Furthermore, the same result holds for any other local conserved quantity.
\end{observation}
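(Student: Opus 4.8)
The plan is to reduce the long-time average to the diagonal (dephased) ensemble and then exploit translation invariance term by term. First I would expand $\rho(t)=e^{-iHt}\rho e^{iHt}$ in an eigenbasis $\{|E_j\rangle\}$ of $H$, writing $\rho(t)=\sum_{j,k}e^{-i(E_j-E_k)t}\langle E_j|\rho|E_k\rangle\,|E_j\rangle\langle E_k|$. Since the spectrum is non-degenerate, $E_j=E_k$ forces $j=k$, so $\frac1\tau\int_0^\tau e^{-i(E_j-E_k)t}\,dt\to\delta_{jk}$ as $\tau\to\infty$, and this finite sum of bounded oscillating terms converges to the dephased state $\bar\rho:=\sum_j\langle E_j|\rho|E_j\rangle\,|E_j\rangle\langle E_j|$. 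Hence the time average in question exists and equals $\tr(\bar\rho H_l)=\sum_j\langle E_j|\rho|E_j\rangle\,\langle E_j|H_l|E_j\rangle$.

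The key step is that non-degeneracy makes every energy eigenstate a simultaneous eigenstate of the translation operator. Because $H$ is TI, $[\T,H]=0$, so $\T$ maps each eigenspace of $H$ to itself; as these eigenspaces are one-dimensional, $\T|E_j\rangle=\lambda_j|E_j\rangle$ with $|\lambda_j|=1$. Using $H_l=\T^{l-1}H_1\T^{-(l-1)}$ and $\bar\lambda_j=\lambda_j^{-1}$, one computes $\langle E_j|H_l|E_j\rangle=\lambda_j^{l-1}\lambda_j^{-(l-1)}\langle E_j|H_1|E_j\rangle=\langle E_j|H_1|E_j\rangle$, which is independent of $l$. Therefore $\tr(\bar\rho H_l)$ takes the same value for every $l$.

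To conclude, sum over $l$: $\sum_{l=1}^n\tr(\bar\rho H_l)=\tr(\bar\rho H)$, and since $H$ is diagonal in its own eigenbasis, $\tr(\bar\rho H)=\sum_j\langle E_j|\rho|E_j\rangle E_j=\tr(\rho H)$ (energy is conserved under the dephasing). As all $n$ summands are equal, each equals $\tr(\rho H)/n$, giving the stated identity. For a general local conserved quantity $Q=\sum_l Q_l$ with $[Q,H]=0$, the same argument applies verbatim: non-degeneracy forces $Q$ to be diagonal in the energy eigenbasis, so $\tr(\bar\rho Q)=\tr(\rho Q)$; the translation computation shows $\langle E_j|Q_l|E_j\rangle$ is independent of $l$; and dividing the conserved total by $n$ yields the claim.

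I do not anticipate a genuine obstacle — the statement is essentially a bookkeeping consequence of ergodicity plus symmetry, hence the label \emph{trivial}. The one point deserving care is the implication ``non-degenerate spectrum $\Rightarrow$ energy eigenstates are translation eigenstates,'' which is precisely where translation invariance is used and which would fail if the spectrum carried symmetry-protected degeneracies; this is also why the non-degeneracy hypothesis (expected to hold almost surely after the random perturbation, per the discussion preceding the observation) is exactly the right assumption.
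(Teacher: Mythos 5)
Your argument is correct and follows essentially the same route as the paper: pass to the dephased (diagonal) ensemble using non-degeneracy, then use translation invariance to show $\langle E_j|H_l|E_j\rangle$ is $l$-independent and hence equals $E_j/n$. The only difference is that you explicitly derive that eigenstates are $\T$-eigenstates from non-degeneracy and $[\T,H]=0$, whereas the paper simply fixes a complete TI eigenbasis of $H$ from the outset and asserts $(H_l)_{jj}=E_j/n$; both are the same computation.
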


The late-time OTOC of (not necessarily Hermitian or unitary) operators $A,B$ is given by
\begin{equation} \label{lateOTOC}
{\rm OTOC}^\infty(A,B):=\lim_{\tau\to\infty}\frac{1}{\tau}\int_0^\tau\langle A^\dag(t)B^\dag A(t)B\rangle\,\mathrm dt,
\end{equation}
where $A(t):=e^{iHt}Ae^{-iHt}$ is the time evolution of $A$ in the Heisenberg picture, and $\langle X\rangle:=\tr X/D$ denotes the expectation value of an operator at infinite temperature.

\begin{theorem} [scrambling] \label{main}
For any TI Hamiltonian $H$ whose spectrum has non-degenerate gaps and any traceless local (not necessarily Hermitian or unitary) operators $A,B$ with bounded norm $\|A\|,\|B\|=O(1)$,
\begin{equation} \label{eq:main}
{\rm OTOC}^\infty(A,B)=O(1/n).
\end{equation}
In this sense, OTOC vanishes at late times in the thermodynamic limit $n\to\infty$.
\end{theorem}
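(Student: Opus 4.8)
The plan is to expand everything in the eigenbasis of $H$ and exploit the non-degenerate gaps condition to kill all but a small number of terms after time-averaging. Write $H=\sum_j E_j |j\rangle\langle j|$, so that $A(t) = \sum_{j,k} e^{i(E_j-E_k)t} A_{jk} |j\rangle\langle k|$ with $A_{jk}=\langle j|A|k\rangle$, and similarly for $B$. Then
\begin{equation}
\langle A^\dag(t) B^\dag A(t) B\rangle = \frac{1}{D}\sum_{j,k,l,m} e^{i(E_j - E_k + E_l - E_m)t}\, (A^\dag)_{?}\, \ldots
\end{equation}
more carefully, $\tr(A^\dag(t)B^\dag A(t)B)$ becomes a sum over four eigenstate indices, say $a,b,c,d$, with phase $e^{i(E_a - E_b + E_c - E_d)t}$ multiplying the product $\bar A_{ba} B^\dag_{\,bc}\, A_{dc}\, B_{da}$ (indices to be pinned down by a routine bookkeeping step). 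Taking $\lim_{\tau\to\infty}\frac1\tau\int_0^\tau \cdot\, \mathrm dt$ projects onto the terms with $E_a - E_b + E_c - E_d = 0$.

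The key step is then to argue, using the non-degenerate gaps hypothesis, that $E_a - E_b + E_c - E_d = 0$ forces a rigid structure on the indices: either $b=a$ and $d=c$, or $d=a$ and $b=c$ (i.e.\ the two positive terms must individually cancel the two negative terms). Indeed $E_a - E_b = E_d - E_c$, so by the non-degenerate gaps condition either ($a=b$ and $c=d$, making the gap zero on both sides) or ($a=d$ and $b=c$). This collapses the quadruple sum to two double sums. The surviving contributions are, schematically, a term of the form $\langle A^\dag A\rangle_{\mathrm{diag}}\langle B^\dag B\rangle_{\mathrm{diag}}$-type object coming from $a=b,c=d$ and a term $\frac1D\sum_{a,c} |A_{ac}|^2 |B_{ac}|^2$-type object from $a=d,b=c$ — both manifestly $O(1/D)$ or controlled, \emph{except} possibly for the diagonal-diagonal term which I must show is genuinely small rather than $O(1)$.

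The main obstacle, therefore, is bounding the diagonal contribution $\frac{1}{D}\sum_{a,c} A^\dag_{aa}\,(\text{stuff})_{ac}\,\ldots$ — concretely, a term like $\frac1D \sum_a \bar A_{aa} \cdot \frac1{?}\sum_c \ldots$, which does not obviously decay. Here is where locality and tracelessness of $A,B$ must enter: since $A$ is traceless, $\sum_a A_{aa} = \tr A = 0$, and since $A$ is \emph{local} with $\|A\|=O(1)$, one expects the diagonal matrix elements $A_{aa}$ in a generic (gap-non-degenerate, hence ``chaotic-looking'') eigenbasis to be small and slowly varying — essentially an ETH-type statement. I would avoid invoking ETH as a black box and instead prove the needed bound directly: use that $A$ is supported on $O(1)$ sites while $H$ is translation-invariant, so $A$ and $\mathbb T^l A \mathbb T^{-l}$ have the same diagonal elements up to relabeling, and average over translates; combined with $\tr A = 0$ and a Cauchy–Schwarz / variance estimate over the $n$ translated copies, this should yield that the diagonal term is $O(1/n)$. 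Stitching the pieces together gives $\mathrm{OTOC}^\infty(A,B) = O(1/n)$; the bookkeeping of which index-coincidence patterns appear and checking each is $O(1/n)$ or smaller is the routine part, while the translation-averaging argument for the diagonal term is the crux.
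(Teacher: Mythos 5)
Your approach is essentially the paper's: expand in the eigenbasis, use non-degenerate gaps to collapse the quadruple sum to the two index patterns $a=b,\,c=d$ and $a=d,\,b=c$, and bound the diagonal matrix elements via translation averaging --- precisely the paper's Lemma~\ref{moment}, which sets $\mathbf A=\frac1n\sum_{l}\T^lA\T^{-l}$ so that $A_{jj}=\mathbf A_{jj}$ by translation invariance of the eigenstates, and then bounds $\frac1D\sum_j|A_{jj}|^2\le\langle\mathbf A^\dagger\mathbf A\rangle=O(1/n)$ by counting surviving Pauli strings in $\mathbf A^\dagger\mathbf A$.

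One substantive correction before you call the bookkeeping routine: the second surviving term (the $a=d,\,b=c$ case) is \emph{not} of the form $\frac1D\sum_{a,c}|A_{ac}|^2|B_{ac}|^2$. Writing out the indices gives $\frac1D\sum_{a\neq b}(A^\dagger)_{ab}A_{ba}(B^\dagger)_{bb}B_{aa}=\frac1D\sum_{a\neq b}|A_{ba}|^2\,\bar B_{bb}B_{aa}$, which contains \emph{diagonal} matrix elements of $B$. So this term is not manifestly small; it requires the same weak-ETH argument, applied to $B$ rather than $A$. In other words, the structure is symmetric: each of the two surviving sums pairs diagonal elements of one operator with a Gram-type quadratic form in the other, and you must run Cauchy--Schwarz on each to reduce it to $\frac1D\sum_j|A_{jj}|^2$ or $\frac1D\sum_j|B_{jj}|^2$ respectively, then invoke the translation-averaging lemma for both. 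With that fix, the argument is the paper's.
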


\begin{remark}
It is possible to extend this theorem to non-local operators. In particular, the proof of Eq. (\ref{eq:main}) remains valid if $A,B$ are generalized Pauli string operators of length less than $0.999n$.
\end{remark}

\begin{definition} [random product states]
Let $|\psi\rangle=\bigotimes_{l=1}^n|\psi_l\rangle$ be a random product state, where each $|\psi_l\rangle$ is chosen independently and uniformly at random with respect to the Haar measure. Let $\E_{\psi\in P}$ denote the average over the set $P$ of random product states, and $\Pr_{\psi\in P}(\cdots)$ be the probability that an event related to random product states occurs.
\end{definition}

Thermalization of random product states means that local reduced density matrices are almost always close to the infinite temperature state at late times. We prove that this statement holds with overwhelming probability. Indeed, the time average of local reduced density matrices approaches the identity matrix at late times (Theorem \ref{thm:ave}), and the fluctuations of local reduced density matrices are exponentially small in the system size (Theorem \ref{thm:var}).

Since local reduced density matrices are completely determined by the expectation values of local observables, we choose to work with the latter for simplicity. Initialized in the state $|\psi\rangle$, the time-averaged expectation value of a (not necessarily Hermitian) operator $A$ is given by
\begin{equation}
A^\infty_\psi:=\lim_{\tau\to\infty}\frac{1}{\tau}\int_0^\tau\langle\psi|A(t)|\psi\rangle\,\mathrm dt,
\end{equation}
and the fluctuation is given by
\begin{equation}
\Delta A^\infty_\psi:=\lim_{\tau\to\infty}\frac{1}{\tau}\int_0^\tau|\langle\psi|A(t)|\psi\rangle-A^\infty_\psi|^2\,\mathrm dt.
\end{equation}

\begin{theorem} \label{thm:ave}
For any TI Hamiltonian $H$ with non-degenerate spectrum and any traceless local (not necessarily Hermitian) operator $A$ with bounded norm $\|A\|=O(1)$,
\begin{equation} \label{eq:ave}
\E_{\psi\in P}|A^\infty_\psi|=O(1/\sqrt n).
\end{equation}
Furthermore, for any $\delta\ge C\sqrt{(\log n)/n}$ with a sufficiently large constant $C=\Theta(1)$,
\begin{equation} \label{prob}
\Pr_{\psi\in P}(|A^\infty_\psi|\ge\delta)=e^{-\Omega(n\delta^2)}.
\end{equation}
\end{theorem}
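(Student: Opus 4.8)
\emph{Strategy and Step 1 (reduction).} The plan is to reduce the late-time average to a single quadratic form in $|\psi\rangle$, use translation invariance to gain a factor $n$ in a Hilbert--Schmidt estimate on that form, and then control its second and higher moments over random product states. Since the spectrum $\{E_j\}$ of $H$ is non-degenerate, let $\{|E_j\rangle\}$ be the eigenbasis and $\mathcal D(X):=\sum_j\langle E_j|X|E_j\rangle\,|E_j\rangle\langle E_j|$ the pinching onto the diagonal. The time average annihilates every off-diagonal phase $e^{i(E_j-E_k)t}$, so $A^\infty_\psi=\langle\psi|\tilde A|\psi\rangle$ with $\tilde A:=\mathcal D(A)$; here $\tr\tilde A=\tr A=0$ and $\|\tilde A\|=\max_j|\langle E_j|A|E_j\rangle|\le\|A\|=O(1)$. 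Because $[H,\T]=0$ and $H$ is non-degenerate, every $|E_j\rangle$ is a $\T$-eigenstate, hence $\langle E_j|\T^lA\T^{-l}|E_j\rangle=\langle E_j|A|E_j\rangle$ for all $l$; therefore $\tilde A=\mathcal D(\bar A)$ with $\bar A:=\frac1n\sum_{l=0}^{n-1}\T^lA\T^{-l}$, and $[\tilde A,\T]=0$. The quantitative input is $\|\bar A\|_2^2=O(D/n)$: expanding $\|\bar A\|_2^2=\frac1n\sum_{m=0}^{n-1}\tr\!\big(A^\dagger\,\T^mA\T^{-m}\big)$ and using that $A$ is traceless and supported on $O(1)$ consecutive sites, every term with the supports of $A$ and $\T^mA\T^{-m}$ disjoint vanishes, leaving $O(1)$ surviving terms each of size $O(D)$. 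Since pinching contracts the Hilbert--Schmidt norm, $\|\tilde A\|_2^2\le\|\bar A\|_2^2=O(D/n)$.

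\emph{Step 2 (second moment; proof of \eqref{eq:ave}).} Write $\tilde A=\sum_P\hat A_P\,P$ in any operator basis $\{P=\bigotimes_lP_l\}$ whose single-site factors are $I_d$ or traceless unitaries normalized by $\tr(P_l^\dagger Q_l)=d\,\delta_{P_l,Q_l}$ (e.g.\ Heisenberg--Weyl operators); then $\hat A_I=0$ and $\sum_P|\hat A_P|^2=\|\tilde A\|_2^2/D=O(1/n)$. Using $\E_{\psi_l}|\psi_l\rangle\langle\psi_l|^{\otimes2}=(I_l+S_l)/(d(d+1))$ with $S_l$ the swap of the two replicas of site $l$, a site-by-site computation gives $\E_{\psi\in P}\big[\langle\psi|P|\psi\rangle\,\overline{\langle\psi|Q|\psi\rangle}\big]=\delta_{P,Q}\,(d+1)^{-s(P)}$, where $s(P):=|\{l:P_l\neq I\}|$, hence
\[
\E_{\psi\in P}|A^\infty_\psi|^2=\sum_P|\hat A_P|^2(d+1)^{-s(P)}\le\frac1{d+1}\sum_P|\hat A_P|^2=O(1/n),
\]
the inequality using $s(P)\ge1$ wherever $\hat A_P\neq0$. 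By Cauchy--Schwarz $\E_{\psi\in P}|A^\infty_\psi|\le\big(\E_{\psi\in P}|A^\infty_\psi|^2\big)^{1/2}=O(1/\sqrt n)$, which is \eqref{eq:ave}.

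\emph{Step 3 (high moments; proof of \eqref{prob}).} For $m\ge1$ insert $\E_{\psi_l}|\psi_l\rangle\langle\psi_l|^{\otimes2m}=\binom{d+2m-1}{2m}^{-1}\Pi_{\mathrm{sym}}^{(2m)}$ (acting on the $2m$ replicas of site $l$) together with $\Pi_{\mathrm{sym}}^{(2m)}=\frac1{(2m)!}\sum_{\pi\in S_{2m}}W_\pi$ into $|A^\infty_\psi|^{2m}=\tr\big((\tilde A^{\otimes m}\otimes(\tilde A^\dagger)^{\otimes m})\,\rho_\psi^{\otimes2m}\big)$, to obtain
\[
\E_{\psi\in P}|A^\infty_\psi|^{2m}=\frac{1}{\big(d(d+1)\cdots(d+2m-1)\big)^{n}}\sum_{\vec\pi\in S_{2m}^{\,n}}\tr\Big(\big(\tilde A^{\otimes m}\otimes(\tilde A^\dagger)^{\otimes m}\big)\bigotimes_{l=1}^n W_{\pi_l}\Big),
\]
where $\pi_l$ permutes the $2m$ replicas of site $l$. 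Expanding $\tilde A$ in the operator basis and evaluating each site factor cycle by cycle---each cycle of $\pi_l$ contributes the trace of a product of single-site basis operators, nonzero only when that product is a scalar---turns the right-hand side into a weighted enumeration of permutation tuples. The dominant tuples act on the replicas as fixed-point-free pairings and reproduce the ``Gaussian'' moments $O(m)^m\,(\E_{\psi\in P}|A^\infty_\psi|^2)^m$, while the $(d+1)^{-s(P)}$ weights (exactly as in Step 2) control all remaining tuples; the outcome is $\E_{\psi\in P}|A^\infty_\psi|^{2m}\le(\kappa m/n)^m$ for a constant $\kappa=\Theta(1)$ and $m$ up to order $n$, with a correction that becomes negligible once $m\gtrsim\log n$. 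Markov's inequality then gives $\Pr_{\psi\in P}(|A^\infty_\psi|\ge\delta)\le(\kappa m/(n\delta^2))^m$, and optimizing at $m\asymp n\delta^2$ yields $e^{-\Omega(n\delta^2)}$; the hypothesis $\delta\ge C\sqrt{(\log n)/n}$ with $C=\Theta(1)$ sufficiently large is precisely what forces $m=\Omega(\log n)$, so the subleading tuples are absorbed, establishing \eqref{prob}.

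\emph{Main obstacle.} Steps 1 and 2 are short; the work is the moment bound of Step 3. The delicate point is that $\tilde A=\mathcal D(\bar A)$ can carry large Pauli weight---the pinching $\mathcal D$ need not preserve locality---so the estimate $\|\tilde A\|=O(1)$ is useless at high moments, and one must instead show that every family of permutation tuples, including those that project $\tilde A$ onto high-weight sectors, is suppressed by the combination of the $(d+1)^{-s(P)}$ factors and the $1/n$ gain of Step 1, uniformly over all non-degenerate translation-invariant $H$. Controlling this interplay, and tracking the constants in it carefully enough to see the threshold $\delta\gtrsim\sqrt{(\log n)/n}$, is the heart of the proof.
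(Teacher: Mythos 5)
Your Steps 1--2 are correct and give Eq.~\eqref{eq:ave}, but by a slightly different route than the paper's: you bound the \emph{second} moment $\E_{\psi\in P}|A^\infty_\psi|^2$ via the Pauli expansion and the one-site formula $\E_{\psi_l}(|\psi_l\rangle\langle\psi_l|)^{\otimes 2}=(I+S)/(d(d+1))$, then apply Cauchy--Schwarz. The paper instead bounds the \emph{first} moment directly, writing $A^\infty_\psi=\sum_j p_{\psi j}A_{jj}$ with $\E_{\psi\in P}p_{\psi j}=1/D$ and then using $\frac1D\sum_j|A_{jj}|\le(\frac1D\sum_j|A_{jj}|^2)^{1/2}$ together with Lemma~\ref{moment} (the weak ETH second-moment bound $\frac1D\sum_j|A_{jj}|^2=O(1/n)$, which is the same $\|\mathbf A\|_2^2$ estimate you also derive). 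Both are fine and yield the same $O(1/\sqrt n)$.

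Step 3 is where the proposal has a genuine gap. You assert, but do not prove, the high-moment bound $\E_{\psi\in P}|A^\infty_\psi|^{2m}\le(\kappa m/n)^m$ for $m$ up to order $n$, and you yourself flag in the ``Main obstacle'' paragraph that the combinatorial control over the permutation tuples---in particular, the suppression of contributions where the pinching $\mathcal D(\bar A)$ lands on high Pauli weight---is precisely the part that is not carried out. Nothing in the proposal bridges that; the ``dominant tuples are pairings'' heuristic is not justified, and the operator-norm bound $\|\tilde A\|=O(1)$ alone is not obviously enough to tame the high-weight sectors. The paper sidesteps this entirely with a much lighter argument: it reduces $\Pr(|A^\infty_\psi|\ge\delta)$ to $O(1/\delta)\,|\{j:|A_{jj}|\ge\delta/2\}|/D$ via a simple Markov-type inequality using $\sum_j p_{\psi j}=1$ and $\E p_{\psi j}=1/D$, and then proves the needed eigenstate tail bound $|\{j:|A_{jj}|\ge\delta\}|/D=e^{-\Omega(n\delta^2)}$ (Lemma~\ref{tail}) by a Schur--Horn majorization of the diagonal of $\mathbf A=\frac1n\sum_l\T^lA\T^{-l}$ by its eigenvalue tail, invoking Anshu's spectral-concentration result $|\{j:a_j\ge\delta'\}|/D=e^{-\Omega(n\delta'^2)}$. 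That input---large-deviation control of the \emph{spectrum} of a normalized sum of local translates---is the key external lemma your moment-method sketch would have to reproduce from scratch, and until it (or an equivalent) is proved, Eq.~\eqref{prob} is not established.
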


\begin{remark}
It is possible to extend Eq. (\ref{eq:ave}) to non-local operators. In particular, the proof of Eq. (\ref{eq:ave}) remains valid if $A$ is a generalized Pauli string operator of length less than $0.999n$.
\end{remark}

\begin{theorem} \label{thm:var}
For any Hamiltonian whose spectrum has non-degenerate gaps and any (not necessarily Hermitian) operator $A$ with bounded norm $\|A\|=O(1)$,
\begin{equation}
\Pr_{\psi\in P}\left(\Delta A^\infty_\psi=e^{-\Omega(n)}\right)=1-e^{-\Omega(n)},
\end{equation}
In this sense, random product states equilibrate exponentially well with overwhelming probability.
\end{theorem}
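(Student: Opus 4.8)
The plan is to expand everything in the eigenbasis of $H$, use the non-degenerate-gap hypothesis to collapse $\Delta A^\infty_\psi$ into a manifestly nonnegative double sum over pairs of eigenstates, bound its expectation over random product states by a short second-moment calculation based on the symmetric-subspace projector, and finally upgrade to a high-probability statement through Markov's inequality.

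First I would diagonalize $H=\sum_j E_j|E_j\rangle\langle E_j|$ and write $|\psi\rangle=\sum_j\psi_j|E_j\rangle$ with $\psi_j:=\langle E_j|\psi\rangle$ and $q_j:=|\psi_j|^2$, so that $\langle\psi|A(t)|\psi\rangle=\sum_{j,k}\overline{\psi_j}\psi_k\langle E_j|A|E_k\rangle e^{i(E_j-E_k)t}$. Non-degenerate gaps imply a non-degenerate spectrum, so the time average retains only the $j=k$ terms and $A^\infty_\psi=\sum_j q_j\langle E_j|A|E_j\rangle$. Subtracting, taking the squared modulus, and averaging in time again, the non-degenerate-gap condition forces any surviving resonance $E_j-E_k=E_{j'}-E_{k'}$ with $j\ne k$ and $j'\ne k'$ to be trivial ($j=j'$, $k=k'$), giving
\begin{equation}
\Delta A^\infty_\psi=\sum_{j\ne k}q_jq_k\,|\langle E_j|A|E_k\rangle|^2 .
\end{equation}
This step is routine and parallels the standard equilibration estimate.

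Next I would estimate the second moment of the eigenbasis overlaps. Grouping the two copies so that both copies of site $l$ are adjacent, $q_jq_k=\langle E_j,E_k|\,(|\psi\rangle\langle\psi|)^{\otimes 2}\,|E_j,E_k\rangle$, and by independence of the single-site states together with $\E_{\psi_l}\big[(|\psi_l\rangle\langle\psi_l|)^{\otimes 2}\big]=\binom{d+1}{2}^{-1}\Pi^{\mathrm{sym}}_l$, where $\Pi^{\mathrm{sym}}_l$ is the projector onto the symmetric subspace of the two copies of site $l$, one obtains $\E_{\psi\in P}\big[(|\psi\rangle\langle\psi|)^{\otimes 2}\big]=\binom{d+1}{2}^{-n}\bigotimes_{l=1}^n\Pi^{\mathrm{sym}}_l\le\binom{d+1}{2}^{-n}I$, since a tensor product of projectors is again a projector (the entanglement of the eigenstates plays no role here). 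Hence $\E_{\psi\in P}[q_jq_k]\le\binom{d+1}{2}^{-n}$ for all $j,k$, and summing against $|\langle E_j|A|E_k\rangle|^2$ while using $\sum_{j,k}|\langle E_j|A|E_k\rangle|^2=\tr(A^\dag A)\le D\|A\|^2$,
\begin{equation}
\E_{\psi\in P}[\Delta A^\infty_\psi]\le\binom{d+1}{2}^{-n}D\|A\|^2=(2/(d+1))^n\|A\|^2 .
\end{equation}
Because $d\ge2$ we have $2/(d+1)\le2/3$, so $\E_{\psi\in P}[\Delta A^\infty_\psi]\le\|A\|^2(2/3)^n=e^{-\Omega(n)}$; Markov's inequality at the threshold $(3/4)^n$ then yields $\Pr_{\psi\in P}\big(\Delta A^\infty_\psi\ge(3/4)^n\big)\le\|A\|^2(8/9)^n=e^{-\Omega(n)}$, i.e. $\Delta A^\infty_\psi=(3/4)^n=e^{-\Omega(n)}$ with probability $1-e^{-\Omega(n)}$, which is the assertion.

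The eigenbasis expansion and the final Markov step are mechanical; the crux is the bound on $\E_{\psi\in P}[q_jq_k]$, which must decay rapidly enough to overcome the factor $D=d^n$ produced by $\tr(A^\dag A)$. The crude bounds $q_jq_k\le1$, or $\E[q_j]\E[q_k]=D^{-2}$ combined with a union bound over the $D$ eigenstates, are both too lossy; what rescues the argument is the factorization of the two-copy product-state average into per-site symmetric projectors, which supplies the exponentially small constant $\binom{d+1}{2}^{-n}$. The only other point requiring care is the bookkeeping of tensor-factor orderings in $(|\psi\rangle\langle\psi|)^{\otimes 2}$.
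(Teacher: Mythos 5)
Your proof is correct and rests on the same three ingredients the paper uses: non-degenerate gaps collapse the double time average to the off-diagonal sum $\Delta A^\infty_\psi=\sum_{j\ne k}q_jq_k|A_{jk}|^2$, the per-site symmetric-subspace projector controls the two-copy moment of a Haar-random product state, and Markov's inequality converts the exponentially small expectation into a high-probability statement.

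The one organizational difference is that the paper splits the argument into a deterministic step and a probabilistic step. Its Lemma~\ref{eff} first bounds $\Delta A^\infty_\psi=O\bigl(\sum_j|c_j|^4\bigr)=O(1/D^{\rm eff}_\psi)$ for \emph{any} fixed $\psi$, using Cauchy--Schwarz on the $j,k$ sum together with the operator-norm bound $(AA^\dag)_{jj}\le\|A\|^2$; Lemma~\ref{effrand} then bounds $\E_{\psi}[1/D^{\rm eff}_\psi]\le(2/(d+1))^n$ by the symmetric-projector computation. You instead pass to expectation first, use the uniform pointwise bound $\E_\psi[q_jq_k]\le\bigl(2/(d(d+1))\bigr)^n$, and absorb the $D$ factor via $\tr(A^\dag A)\le D\|A\|^2$ rather than via Cauchy--Schwarz. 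Both routes produce the identical exponent $(2/(d+1))^n$ and essentially the same amount of work. What the paper's factorization buys is an operator-independent intermediate statement -- $D^{\rm eff}_\psi$ is exponentially large with overwhelming probability, after which \emph{every} bounded observable equilibrates for that same $\psi$ -- and it lets the authors cite the known equilibration bound \cite{Tas98, Rei08, LPSW09, Sho11} wholesale; your version is marginally more direct but proves equilibration only observable-by-observable before the union-bound-free Markov step.
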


\begin{remark}
This theorem is very general. It does not assume that the Hamiltonian $H$ is TI, nor any locality of $H$ or $A$.
\end{remark}

We briefly discuss recent related works. Lensky and Qi \cite{LQ19} studied the thermalization of random product states in quantum chaotic systems where the mutual information between subsystems in two copies of the system is small. Theorems \ref{thm:ave}, \ref{thm:var} only assume that the spectrum of the Hamiltonian is non-degenerate and has non-degenerate gaps, respectively, but not any sense of chaoticity.

Using a Berry--Essen theorem for quantum lattice systems \cite{BC15}, Farrelly {\it et al.} \cite{FBC17} proved that any state $|\phi\rangle$ with exponential decay of correlations satisfies that
\begin{equation}
\Delta A^\infty_\phi=\frac{\poly\log n}{s^3\sqrt n},\quad s:=\sqrt{\frac{\langle\phi|H^2|\phi\rangle-\langle\phi|H|\phi\rangle^2}{n}}=O(1).
\end{equation}
Wilming {\it et al.} \cite{WGRE19} proved that
\begin{equation}
\Delta A^\infty_\psi=e^{-\Omega(n)}
\end{equation}
for any product state $|\psi\rangle$ in entanglement-ergodic systems, where the Renyi entanglement entropy of eigenstates in the bulk of the spectrum obeys a volume law for some, not necessarily connected, subsystems. Both Refs. \cite{FBC17, WGRE19} assume that the spectrum of the Hamiltonian has non-degenerate gaps. Theorem \ref{thm:var} is technically neither stronger nor weaker than the results of these references.

\section{Proofs}

Let $\{|1\rangle,|2\rangle,\ldots,|D\rangle\}$ be a complete set of TI eigenstates of $H$ with corresponding energies $E_1\le E_2\le\cdots\le E_D$, and $A_{jk}:=\langle j|A|k\rangle$ be the matrix element of an operator in the energy eigenbasis.

\subsection{Proof of Observation \ref{trivial}}

Writing out the matrix elements,
\begin{multline}
\lim_{\tau\to\infty}\frac{1}{\tau}\int_0^\tau\tr(\rho(t)H_l)\,\mathrm dt=\lim_{\tau\to\infty}\frac{1}{\tau}\int_0^\tau\sum_{j,k}\rho_{jk}(H_l)_{kj}e^{i(E_k-E_j)t}\,\mathrm dt=\sum_{j,k:E_j=E_k}\rho_{jk}(H_l)_{kj}\\
=\sum_j\rho_{jj}(H_l)_{jj}=\sum_j\rho_{jj}E_j/n=\tr(\rho H)/n,
\end{multline}
where we used the assumption of a non-degenerate spectrum and the fact that $(H_l)_{jj}=E_j/n$ for any $l$ due to TI. It is easy to see that the same result holds for any other local conserved quantity.

\subsection{Proof of Theorem \ref{main}}

We use a version of the weak eigenstate thermalization hypothesis (ETH) at infinite temperature, valid for any TI system.

\begin{lemma} [\cite{BKL10, KLW15}] \label{moment}
For any traceless local operator $A$ with bounded norm $\|A\|=O(1)$,
\begin{equation} \label{eq:moment}
\frac{1}{D}\sum_j|A_{jj}|^2=O(1/n).
\end{equation}
\end{lemma}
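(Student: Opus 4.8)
The plan is to bound the diagonal matrix elements $A_{jj}$ in terms of how spatially spread out the energy eigenstates are, exploiting translation invariance of both $H$ and its eigenbasis. Since $H$ commutes with $\T$, we may choose the eigenbasis $\{|j\rangle\}$ to be simultaneous eigenstates of $\T$, so each $|j\rangle$ carries a definite momentum and is therefore completely delocalized: $\langle j|\T^l X\T^{-l}|j\rangle=\langle j|X|j\rangle$ for every operator $X$ and every shift $l$. Writing $A$ as supported on $O(1)$ consecutive sites, I would form the translation-averaged operator $\bar A:=\frac1n\sum_{l=0}^{n-1}\T^l A\T^{-l}$, which is TI, traceless, and satisfies $\|\bar A\|=O(1)$; crucially $A_{jj}=\langle j|A|j\rangle=\langle j|\bar A|j\rangle$ for all $j$ by the momentum-eigenstate property. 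So it suffices to bound $\frac1D\sum_j|\langle j|\bar A|j\rangle|^2$.

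Next I would relate this sum to the variance of $\bar A$ at infinite temperature. We have $\frac1D\sum_j|\langle j|\bar A|j\rangle|^2\le\frac1D\sum_j\langle j|\bar A^\dagger\bar A|j\rangle=\langle\bar A^\dagger\bar A\rangle=\frac1D\tr(\bar A^\dagger\bar A)$, but this only gives an $O(1)$ bound, not $O(1/n)$ — the point of the $1/n$ is that $\bar A$, being an average of $n$ translates of a local operator, has small Hilbert–Schmidt norm in a suitable sense. The correct route is: expand $\tr(\bar A^\dagger\bar A)=\frac1{n^2}\sum_{l,l'}\tr\big((\T^l A\T^{-l})^\dagger(\T^{l'}A\T^{-l'})\big)=\frac1{n^2}\sum_{l,l'}\tr\big(A^\dagger\,\T^{l'-l}A\T^{l-l'}\big)$. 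Because $A$ acts on only $k=O(1)$ sites, the translates $\T^m A\T^{-m}$ and $A$ have disjoint support whenever $|m|\gtrsim k$, so $\tr(A^\dagger\,\T^m A\T^{-m})/D=\langle A^\dagger\rangle\langle A\rangle=0$ (using tracelessness) for all but $O(1)$ values of $m\bmod n$. Hence only $O(n)$ of the $n^2$ terms survive, each of size $O(D)$, giving $\frac1D\tr(\bar A^\dagger\bar A)=O(1/n)$. Combining with the inequality above yields Eq. (\ref{eq:moment}).

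The main technical point to get right is the second step: the naive bound $\frac1D\sum_j|A_{jj}|^2\le\langle A^\dagger A\rangle=O(1)$ loses the factor $n$, so one must pass to $\bar A$ \emph{before} applying positivity, and then extract the $1/n$ from the near-orthogonality of the translates of a local traceless operator under the Hilbert–Schmidt inner product. A minor subtlety is the handling of translates with $|m|$ close to $0$ or $n$ (wrap-around on the ring) and the case $k\gtrsim n/2$, but since $k=O(1)$ while $n\to\infty$ this is immaterial. This is exactly the content of the weak-ETH-type estimate of \cite{BKL10, KLW15}, and I would cite their argument for the clean version; the sketch above is the self-contained reason it holds.
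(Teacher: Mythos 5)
Your proposal is correct and follows essentially the same route as the paper's proof: pass to the translation-averaged operator $\mathbf A=\frac1n\sum_l\T^lA\T^{-l}$, use $A_{jj}=\mathbf A_{jj}$ (since the eigenstates are simultaneously eigenstates of $\T$), bound $\frac1D\sum_j|\mathbf A_{jj}|^2\le\langle\mathbf A^\dag\mathbf A\rangle$, and then show $\langle\mathbf A^\dag\mathbf A\rangle=O(1/n)$ because only $O(n)$ of the $n^2$ translate pairs contribute. The only cosmetic difference is that the paper phrases the final counting via expansion in the generalized Pauli basis, whereas you argue directly from disjoint supports plus tracelessness of $A$; the count is identical.
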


\begin{proof}
We include a proof for completeness. Let
\begin{equation} \label{TI}
\mathbf A:=\frac{1}{n}\sum_{l=0}^{n-1}\T^lA\T^{-l}
\end{equation}
so that $A_{jj}=\mathbf A_{jj}$ due to TI. Hence,
\begin{equation}
\frac{1}{D}\sum_j|A_{jj}|^2=\frac{1}{D}\sum_j(\mathbf A^\dag)_{jj}\mathbf A_{jj}\le\frac{1}{D}\sum_{j,k}(\mathbf A^\dag)_{jk}\mathbf A_{kj}=\frac{1}{D}\sum_j(\mathbf A^\dag\mathbf A)_{jj}=\langle\mathbf A^\dag\mathbf A\rangle.
\end{equation}
Expanding $\mathbf A$ in the generalized Pauli basis, we count the number of terms that do not vanish upon taking the trace in the expansion of $\mathbf A^\dag\mathbf A$. There are $O(n)$ such terms, each of which is $O(1/n^2)$. Therefore, $\langle\mathbf A^\dag\mathbf A\rangle=O(1/n)$.
\end{proof}

\begin{remark}
It is possible to extend Lemma \ref{moment} to non-local operators. In particular, we still have that $\langle\mathbf A^\dag\mathbf A\rangle=O(1/n)$ and hence the proof of Eq. (\ref{eq:moment}) remains valid if $A$ is a generalized Pauli string operator of length less than $0.999n$.
\end{remark}

The proof of Theorem \ref{main} begins by following the calculations of Ref. \cite{HBZ19}. Writing out the matrix elements,
\begin{equation}
\langle A^\dag(t)B^\dag A(t)B\rangle=\frac{1}{D}\sum_{p,q,r,s}(A^\dag)_{pq}(B^\dag)_{qr}A_{rs}B_{sp}e^{i(E_p-E_q+E_r-E_s)t}.
\end{equation}
Substituting into Eq. (\ref{lateOTOC}),
\begin{equation}
{\rm OTOC}^\infty(A,B)=\frac{1}{D}\sum_{p,q,r,s:E_p+E_r=E_q+E_s}(A^\dag)_{pq}(B^\dag)_{qr}A_{rs}B_{sp}.
\end{equation}
Since the spectrum of $H$ has non-degenerate gaps,
\begin{equation}
E_p+E_r=E_q+E_s\implies((p=q)~{\rm and}~(r=s))~{\rm or}~((p=s)~{\rm and}~(r=q)).
\end{equation}
Hence,
\begin{equation} \label{temp}
{\rm OTOC}^\infty(A,B)=\frac{1}{D}\sum_{j,k}(A^\dag)_{jj}(B^\dag)_{jk}A_{kk}B_{kj}+\frac{1}{D}\sum_{j\neq k}(A^\dag)_{jk}(B^\dag)_{kk}A_{kj}B_{jj}.
\end{equation}
(The constraint $j\neq k$ in the second sum avoids double counting.) The first term on the right-hand side of Eq. (\ref{temp}) is upper bounded by
\begin{multline}
\frac{1}{D}\sum_{j,k}|A_{jj}||A_{kk}||B_{kj}|^2\le\frac{1}{D}\sqrt{\sum_{j,k}|A_{jj}|^2|B_{kj}|^2\times\sum_{j,k}|A_{kk}|^2|B_{kj}|^2}\\
=\frac{1}{D}\sqrt{\sum_j|A_{jj}|^2(B^\dag B)_{jj}\times\sum_k|A_{kk}|^2(BB^\dag)_{kk}}\le\frac{O(1)}{D}\sum_j|A_{jj}|^2=O(1/n),
\end{multline}
where we used the fact that $(B^\dag B)_{jj}\le\|B\|^2=O(1)$ and Lemma \ref{moment}. We complete the proof of Theorem \ref{main} by noting that the second term on the right-hand side of Eq. (\ref{temp}) can be upper bounded similarly.

\subsection{Proof of Theorem \ref{thm:ave}}

Let $c_j:=\langle\psi|j\rangle$ and $p_{\psi j}:=|c_j^2|$. Similar to the calculation in the proof of Observation \ref{trivial},
\begin{multline}
A^\infty_\psi=\lim_{\tau\to\infty}\frac{1}{\tau}\int_0^\tau\langle\psi|A(t)|\psi\rangle\,\mathrm dt=\lim_{\tau\to\infty}\frac{1}{\tau}\int_0^\tau\sum_{j,k}c_jc_k^*A_{jk}e^{i(E_j-E_k)t}\,\mathrm dt=\sum_{j,k:E_j=E_k}c_jc_k^*A_{jk}\\
=\sum_jc_jc_j^*A_{jj}=\sum_jp_{\psi j}A_{jj},
\end{multline}
where we used the assumption of a non-degenerate spectrum. Therefore,
\begin{equation}
\E_{\psi\in P}|A^\infty_\psi|=\E_{\psi\in P}\left|\sum_jp_{\psi j}A_{jj}\right|\le\sum_j|A_{jj}|\E_{\psi\in P}p_{\psi j}=\frac{1}{D}\sum_j|A_{jj}|\le\sqrt{\frac{1}{D}\sum_j|A_{jj}|^2}=\frac{O(1)}{\sqrt n},
\end{equation}
where we used the fact that $\E_{\psi\in P}p_{\psi j}=1/D$ and Lemma \ref{moment}. This completes the proof of Eq. (\ref{eq:ave}).

For the ``furthermore'' part,
\begin{align}
&\Pr_{\psi\in P}(|A^\infty_\psi|\ge\delta)=\int_{\psi\in P:|A^\infty_\psi|\ge\delta}\,\mathrm d\psi\le\frac{2}{\delta}\int_{\psi\in P:|A^\infty_\psi|\ge\delta}\left(|A^\infty_\psi|-\frac{\delta}{2}\right)\,\mathrm d\psi\nonumber\\
&=\frac{2}{\delta}\int_{\psi\in P:|A^\infty_\psi|\ge\delta}\left(\left|\sum_jp_{\psi j}A_{jj}\right|-\frac{\delta}{2}\right)\,\mathrm d\psi\le\frac{2}{\delta}\int_{\psi\in P:|A^\infty_\psi|\ge\delta}\sum_j\left(|A_{jj}|-\frac{\delta}{2}\right)p_{\psi j}\,\mathrm d\psi\nonumber\\
&\le\frac{2}{\delta}\int_{\psi\in P:|A^\infty_\psi|\ge\delta}\sum_{j:|A_{jj}|\ge\delta/2}O(p_{\psi j})\,\mathrm d\psi\le\frac{O(1)}{\delta}\sum_{j:|A_{jj}|\ge\delta/2}\int_{\psi\in P}p_{\psi j}\,\mathrm d\psi=\frac{O(1)}{\delta}\sum_{j:|A_{jj}|\ge\delta/2}\E_{\psi\in P}p_{\psi j}\nonumber\\
&=O(1/\delta)|\{j:|A_{jj}|\ge\delta/2\}|/D,
\end{align}
where we used the normalization conditions $\sum_jp_{\psi j}=1$ and $\E_{\psi\in P}p_{\psi j}=1/D$. Since $O(1/\delta)$ can be absorbed into the notation $e^{-\Omega(n\delta^2)}$ when $\delta\ge C\sqrt{(\log n)/n}$, it suffices to prove Lemma \ref{tail} below.

\begin{lemma} \label{tail}
For any traceless local operator $A$ with bounded norm $\|A\|=O(1)$ and any $\delta\ge C\sqrt{(\log n)/n}$ with a sufficiently large constant $C=\Theta(1)$,
\begin{equation} \label{eq:eth}
|\{j:|A_{jj}|\ge\delta\}|/D=e^{-\Omega(n\delta^2)}.
\end{equation}
\end{lemma}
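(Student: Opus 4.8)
The plan is to upgrade Lemma \ref{moment} from a second-moment bound on the diagonal matrix elements to a bound on every even moment, and then feed a high moment into Markov's inequality with the order tuned to $\delta$. The target is: for a constant $K=\Theta(1)$ depending only on $d$, $k$, and $\|A\|$, and for every positive integer $m$ up to some $m=\Theta(n)$,
\begin{equation}
\frac{1}{D}\sum_j|A_{jj}|^{2m}\le\left(\frac{Km}{n}\right)^m
\end{equation}
(the case $m=1$ being Lemma \ref{moment}). To prove this, recall from the proof of Lemma \ref{moment} that $A_{jj}=\mathbf A_{jj}$ for the translation average $\mathbf A=\frac{1}{n}\sum_{l=0}^{n-1}\T^lA\T^{-l}=:\frac{1}{n}\sum_{l=0}^{n-1}a_l$, where each $a_l$ is traceless, of norm $O(1)$, and supported on $k$ consecutive sites. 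Since $\mathbf A^\dag\mathbf A\succeq0$, the pointwise Cauchy--Schwarz bound $|\mathbf A_{jj}|^2\le(\mathbf A^\dag\mathbf A)_{jj}$ together with Jensen's inequality (each $(\mathbf A^\dag\mathbf A)_{jj}$ is a convex average of the eigenvalues of $\mathbf A^\dag\mathbf A$) gives
\begin{equation}
\frac{1}{D}\sum_j|\mathbf A_{jj}|^{2m}\le\frac{1}{D}\sum_j\bigl((\mathbf A^\dag\mathbf A)_{jj}\bigr)^m\le\langle(\mathbf A^\dag\mathbf A)^m\rangle .
\end{equation}
Expanding $(\mathbf A^\dag\mathbf A)^m=n^{-2m}\sum a_{l_1}^\dag a_{l_1'}\cdots a_{l_m}^\dag a_{l_m'}$ over all $2m$-tuples of sites and taking the normalized trace, exactly as in the proof of Lemma \ref{moment} a term survives only if the $2m$ supporting intervals form a union every connected component of which contains at least two of them, and each surviving term has magnitude $O(1)^{2m}$. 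Counting the surviving tuples --- at most $m$ components, one free position in $\{0,\ldots,n-1\}$ per component, and each remaining interval pinned by requiring it to overlap a previously placed interval of its component ($O(m)$ choices of which one, $O(k)$ of the offset) --- bounds their number by $(Kmn)^m$, so $\langle(\mathbf A^\dag\mathbf A)^m\rangle\le(Km/n)^m$.

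Granting the moment bound, Markov's inequality applied to $|A_{jj}|^{2m}$ under the uniform distribution on $j$ gives, for each admissible $m$,
\begin{equation}
\frac{|\{j:|A_{jj}|\ge\delta\}|}{D}\le\frac{1}{\delta^{2m}}\cdot\frac{1}{D}\sum_j|A_{jj}|^{2m}\le\left(\frac{Km}{n\delta^2}\right)^m .
\end{equation}
Choose $m=\lfloor n\delta^2/(eK)\rfloor$; with $K$ taken large enough this stays within the range of validity $m=O(n/k)$ of the moment bound for all admissible $\delta\le\|A\|=O(1)$, and it makes the right-hand side at most $e^{-m}$. The hypothesis $\delta\ge C\sqrt{(\log n)/n}$ with $C=\Theta(1)$ sufficiently large forces $m=\Omega(\log n)\ge1$, so $e^{-m}$ is smaller than every fixed inverse power of $n$, exactly the behaviour asserted. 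Hence $|\{j:|A_{jj}|\ge\delta\}|/D=e^{-\Omega(n\delta^2)}$.

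The crux is the combinatorial count in the moment bound: one must obtain $(Kmn)^m$, i.e.\ only a \emph{single} power of $m$ in the per-component factor, since an extra $m^m$ --- which a careless count (e.g.\ one that pays $(2m)!$ for ordering the intervals) would introduce --- degrades the conclusion to the useless $e^{-\Omega(\sqrt n\,\delta)}$. Getting this right requires charging each connected cluster only one factor $O(n)$ for its location and $O(k)$ per additional interval it absorbs (via the ``overlap a previously placed interval'' rule above), carefully bounding how many ways the $2m$ intervals can be grouped into clusters, and checking that the all-pairs groupings dominate as long as $m=O(n/k)$, beyond which the intervals begin to tile the chain and the argument must be stopped.
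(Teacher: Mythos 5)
Your proof is correct, and it takes a genuinely different route from the paper's. The paper reduces (after splitting $A$ into Hermitian parts) to a concentration statement about the \emph{eigenvalues} of the translation average $\mathbf A$ --- it cites Anshu's bound $|\{j:a_j\ge\delta'\}|/D=e^{-\Omega(n\delta'^2)}$ --- then transfers this to the diagonal entries $\mathbf A_{jj}$ via the Schur--Horn majorization inequality $\sum_{j:A_{jj}\ge\delta}A_{jj}\le\sum_{j=D-m+1}^D a_j$ and an optimization over an auxiliary cutoff $m'$. You instead bypass the spectral lemma entirely: you upgrade the second-moment bound of Lemma \ref{moment} to all even moments $\frac{1}{D}\sum_j|A_{jj}|^{2m}\le(Km/n)^m$ via the chain $|\mathbf A_{jj}|^{2m}\le((\mathbf A^\dag\mathbf A)_{jj})^m\le((\mathbf A^\dag\mathbf A)^m)_{jj}$ (Cauchy--Schwarz then operator Jensen) and a cluster count of nonvanishing trace terms, then feed the $m\approx n\delta^2/(eK)$-th moment into Markov. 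Each approach buys something: yours is self-contained (no external spectral-concentration citation), works directly for non-Hermitian $A$ without the real/imaginary split, and makes it clear that the optimal moment order recovers exactly the Gaussian-type tail; the paper's outsources the hard combinatorics to Anshu's lemma and keeps the rest of the argument elementary. Two small points of hygiene: you correctly flag that a careless count paying $(2m)!$ for orderings would degrade the tail to $e^{-\Omega(\sqrt n\,\delta)}$, and your sequential ``founder vs.\ attach'' bookkeeping does avoid that, though the assembled bound $\sum_{c\le m}\binom{2m}{c}n^c(O(mk))^{2m-c}\le(Kmn)^m$ under $mk=O(n)$ deserves one more displayed line in a final write-up. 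Also, your closing sentence (``smaller than every fixed inverse power of $n$'') undersells what you actually proved; it is cleaner to note $m=\Omega(n\delta^2)$ directly, so $e^{-m}=e^{-\Omega(n\delta^2)}$, which is exactly \eqref{eq:eth}.
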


\begin{proof}
Assume without loss of generality that $A$ is Hermitian. Otherwise, we decompose it as
\begin{equation}
A=\Re A+i\Im A,\quad\Re A:=(A^\dag+A)/2,\quad\Im A:=i(A^\dag-A)/2
\end{equation}
so that both the real part $\Re A$ and the imaginary part $\Im A$ are Hermitian. If we can prove that
\begin{equation}
|\{j:|(\Re A)_{jj}|\ge\delta/2\}|/D=e^{-\Omega(n\delta^2)},\quad |\{j:|(\Im A)_{jj}|\ge\delta/2\}|/D=e^{-\Omega(n\delta^2)},
\end{equation}
then Eq. (\ref{eq:eth}) follows from the union bound.

We define $\mathbf A$ as in Eq. (\ref{TI}) so that $A_{jj}=\mathbf A_{jj}$ due to TI. Let $\{|a_1\rangle,|a_2\rangle,\ldots,|a_D\rangle\}$ be a complete set of TI eigenstates of $\mathbf A$ with corresponding eigenvalues $-O(1)=a_1\le a_2\le\cdots\le a_D=O(1)$. Let $m:=|\{j:A_{jj}\ge\delta\}|$. It is not difficult to see that
\begin{equation} \label{opt}
m\delta\le\sum_{j:A_{jj}\ge\delta}A_{jj}=\sum_{j:A_{jj}\ge\delta}\mathbf A_{jj}\le\sum_{j=D-m+1}^Da_j.
\end{equation}

\begin{lemma} [\cite{Ans16}]
For any $\delta'>0$,
\begin{equation}
|\{j:a_j\ge\delta'\}|/D=e^{-\Omega(n\delta'^2)}.
\end{equation}

\end{lemma}
This lemma implies that
\begin{equation}
a_{D-m'}\le O\left(\sqrt{\log(D/m')/n}\right),\quad\forall m'>0,
\end{equation}
and thus for any $m'\le m$,
\begin{multline}
\sum_{j=D-m+1}^Da_j=\sum_{j=D-m+1}^{D-m'}a_j+\sum_{j=D-m'+1}^Da_j\le(m-m')a_{D-m'}+O(m')\\
\le mO\left(\sqrt{\log(D/m')/n}\right)+O(m').
\end{multline}
Combining this inequality with (\ref{opt}), we obtain that
\begin{equation} \label{ave}
\delta\le O\left(\sqrt{\log(D/m')/n}\right)+O(m'/m).
\end{equation}
Letting $m'=cm\delta$ with a sufficiently small constant $c=\Theta(1)$, the last term on the right-hand side can be upper bounded by $\delta/2$. Hence,
\begin{equation}
\frac{\delta}{2}=O\left(\sqrt{\frac{1}{n}\log\frac{D}{cm\delta}}\right)\implies\frac{m}{D}=\frac{e^{-\Omega(n\delta^2)}}{c\delta}=e^{-\Omega(n\delta^2)},
\end{equation}
where we used $\delta\ge C\sqrt{(\log n)/n}$ to absorb $1/\delta$ into the notation $e^{-\Omega(n\delta^2)}$. We complete the proof of Lemma \ref{tail} by noting that $|\{j:A_{jj}\le-\delta\}|/D$ can be upper bounded similarly.
\end{proof}

\begin{remark}
Lemma \ref{tail} is a probabilistic version of the weak ETH at infinite temperature, valid for any TI system. Previously, a very similar but slightly weaker bound for microcanonical ensembles with exponential decay of correlations was proved by Brand\~ao {\it et al.} \cite{BCSB19} based on the work of Mori \cite{Mor16}.
\end{remark}

\subsection{Proof of Theorem \ref{thm:var}}

Theorem \ref{thm:var} follows by directly combining Lemmas \ref{eff} and \ref{effrand} below.

\begin{lemma} [\cite{Tas98, Rei08, LPSW09, Sho11}] \label{eff}
For any Hamiltonian whose spectrum has non-degenerate gaps and any operator $A$ with bounded norm $\|A\|=O(1)$,
\begin{equation}
\Delta A^\infty_\psi=O(1/D^{\rm eff}_\psi),
\end{equation}
where the effective dimension is defined as
\begin{equation}
1/D^{\rm eff}_\psi=\sum_j|\langle\psi|j\rangle|^4.
\end{equation}
\end{lemma}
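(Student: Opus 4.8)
The plan is to reduce $\Delta A^\infty_\psi$ to an explicit sum over pairs of energy eigenstates and then estimate that sum by Cauchy--Schwarz, following the standard equilibration arguments \cite{Tas98, Rei08, LPSW09, Sho11}. First I would observe that the non-degenerate-gaps hypothesis implies a non-degenerate spectrum: if $E_j-E_k=0=E_{j'}-E_{j'}$ for some $j\neq k$, then $j=j'$ and $k=j'$, forcing $j=k$, a contradiction. Hence the computation in the proof of Theorem \ref{thm:ave} goes through and gives $A^\infty_\psi=\sum_jp_{\psi j}A_{jj}$, where $c_j=\langle\psi|j\rangle$ and $p_{\psi j}=|c_j|^2$ as in that proof. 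Expanding $A(t)$ in the energy eigenbasis then yields $\langle\psi|A(t)|\psi\rangle-A^\infty_\psi=\sum_{j\neq k}c_jc_k^*A_{jk}e^{i(E_j-E_k)t}$.

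Next I would take the squared modulus of this difference and average over time. A cross term indexed by pairs $(j,k)$ and $(j',k')$, with $j\neq k$ and $j'\neq k'$, carries the phase $e^{i(E_j-E_k-E_{j'}+E_{k'})t}$, which time-averages to $1$ when $E_j-E_k=E_{j'}-E_{k'}$ and to $0$ otherwise; by the non-degenerate-gaps assumption the former forces $(j',k')=(j,k)$. (In particular the ``swapped'' choice $(j',k')=(k,j)$ cannot contribute, since it would require $E_j-E_k=E_k-E_j$, hence $E_j=E_k$, violating non-degeneracy.) Only the diagonal-in-pairs terms survive, so $\Delta A^\infty_\psi=\sum_{j\neq k}p_{\psi j}p_{\psi k}|A_{jk}|^2$.

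The last step is to bound this sum by $O(\sum_jp_{\psi j}^2)$. The trick I would use is to write $p_{\psi j}p_{\psi k}|A_{jk}|^2=(p_{\psi j}|A_{jk}|)(p_{\psi k}|A_{jk}|)$ and apply Cauchy--Schwarz over the index pair $(j,k)$, bounding $\Delta A^\infty_\psi$ by $\sqrt{\sum_{j\neq k}p_{\psi j}^2|A_{jk}|^2}\cdot\sqrt{\sum_{j\neq k}p_{\psi k}^2|A_{jk}|^2}$. Each factor is then controlled using $\sum_k|A_{jk}|^2=(AA^\dag)_{jj}\le\|A\|^2$ for the first and $\sum_j|A_{jk}|^2=(A^\dag A)_{kk}\le\|A\|^2$ for the second, giving $\Delta A^\infty_\psi\le\|A\|^2\sum_jp_{\psi j}^2=\|A\|^2/D^{\rm eff}_\psi=O(1/D^{\rm eff}_\psi)$ since $\|A\|=O(1)$. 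This lemma is classical, so no step is genuinely hard; the two places that need care are the time-averaging---one must check that \emph{every} off-diagonal cross term is killed, since a termwise bound $|A_{jk}|^2\le\|A\|^2$ would only yield $O(1)$---and the Cauchy--Schwarz splitting itself, since the cruder estimate $\sum_{j\neq k}p_{\psi j}p_{\psi k}|A_{jk}|^2\le\|A\|^2\max_jp_{\psi j}$ gives a strictly weaker bound (as $\sum_jp_{\psi j}^2\le\max_jp_{\psi j}$).
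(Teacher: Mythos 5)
Your proposal is correct and follows essentially the same route as the paper: expand $\Delta A^\infty_\psi$ in the energy eigenbasis, kill the off-diagonal cross terms by the non-degenerate-gaps assumption to get $\sum_{j\neq k}p_{\psi j}p_{\psi k}|A_{jk}|^2$, and then apply Cauchy--Schwarz over the index pair together with $\sum_k|A_{jk}|^2=(AA^\dag)_{jj}\le\|A\|^2$ to land on $O(\sum_j p_{\psi j}^2)=O(1/D^{\rm eff}_\psi)$. One small technical slip: your deduction that non-degenerate gaps implies a non-degenerate spectrum invokes the defining implication with the degenerate pair $(j',k')=(j',j')$, which the definition does not cover; the clean version is to note that $E_j=E_k$ with $j\neq k$ would give $E_j-E_k=E_k-E_j$ with two distinct-index pairs $(j,k)$ and $(k,j)$, forcing $j=k$, a contradiction.
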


\begin{proof}
We include a proof for completeness. Let $c_j:=\langle\psi|j\rangle$. Writing out the matrix elements,
\begin{align}
&\Delta A^\infty_\psi=\lim_{\tau\to\infty}\frac{1}{\tau}\int_0^\tau|\langle\psi|A(t)|\psi\rangle-A^\infty_\psi|^2\,\mathrm dt=\lim_{\tau\to\infty}\frac{1}{\tau}\int_0^\tau\left|\sum_{j\neq k}c_jc_k^*A_{jk}e^{i(E_j-E_k)t}\right|^2\,\mathrm dt\nonumber\\
&=\sum_{j\neq k,j'\neq k'}c_jc_k^*c_{j'}^*c_{k'}A_{jk}(A^\dag)_{k'j'}\lim_{\tau\to\infty}\frac{1}{\tau}\int_0^\tau e^{i(E_j-E_k-E_{j'}+E_{k'})t}\,\mathrm dt=\sum_{j\neq k}|c_j|^2|c_k|^2A_{jk}(A^\dag)_{kj}\nonumber\\
&\le\sqrt{\sum_{j\neq k}|c_j|^4A_{jk}(A^\dag)_{kj}\times\sum_{j\neq k}|c_k|^4(A^\dag)_{kj}A_{jk}}\le\sqrt{\sum_j|c_j|^4(AA^\dag)_{jj}\times\sum_k|c_k|^4(A^\dag A)_{kk}}\nonumber\\
&=\sum_jO(|c_j|^4)=O(1/D^{\rm eff}_\psi),
\end{align}
where we used the assumption that the spectrum of the Hamiltonian has non-degenerate gaps and the fact that $(AA^\dag)_{jj}\le\|A\|^2=O(1)$.
\end{proof}

\begin{remark}
There are improvements and extensions of Lemma \ref{eff} in the literature \cite{SF12, Rei12}. See Ref. \cite{GE16} for a comprehensive review of these developments.
\end{remark}

\begin{lemma} \label{effrand}
Let $\{|b_1\rangle,|b_2\rangle,\ldots,|b_D\rangle\}$ be an arbitrary orthonormal basis of the Hilbert space $(\mathbb C^d)^{\otimes n}$. Then,
\begin{equation}
\E_{\psi\in P}\sum_j|\langle\psi|b_j\rangle|^4\le\left(\frac{2}{d+1}\right)^n=e^{-\Omega(n)}.
\end{equation}
Furthermore,
\begin{equation} \label{mar}
\Pr_{\psi\in P}\left(\sum_j|\langle\psi|b_j\rangle|^4=e^{-\Omega(n)}\right)=1-e^{-\Omega(n)}.
\end{equation}
\end{lemma}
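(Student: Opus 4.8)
The plan is to bound the second moment $\mu:=\E_{\psi\in P}\sum_j|\langle\psi|b_j\rangle|^4$ by a direct Haar integration over the product state, and then to derive the ``furthermore'' statement from Markov's inequality.

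First I would pass to two copies of the Hilbert space and write $\sum_j|\langle\psi|b_j\rangle|^4=\sum_j\tr\big[(|b_j\rangle\langle b_j|)^{\otimes2}(|\psi\rangle\langle\psi|)^{\otimes2}\big]$. Since $|\psi\rangle=\bigotimes_{l=1}^n|\psi_l\rangle$ with the $|\psi_l\rangle$ independent and Haar-random, and since the standard single-qudit second-moment identity reads $\E_{\psi_l}(|\psi_l\rangle\langle\psi_l|)^{\otimes2}=(I+S_l)/\big(d(d+1)\big)$, where $S_l$ is the swap acting on the two copies of site $l$, grouping the two copies of site $l$ together gives $\E_{\psi\in P}(|\psi\rangle\langle\psi|)^{\otimes2}=\big(d(d+1)\big)^{-n}\bigotimes_{l=1}^n(I+S_l)$.

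Next I would expand $\bigotimes_{l=1}^n(I+S_l)=\sum_{A\subseteq\{1,\ldots,n\}}S_A$, where $S_A:=\prod_{l\in A}S_l$ swaps exactly the sites in $A$ between the two copies. A short index computation shows that for each pure state $|b_j\rangle$ one has $\tr\big[(|b_j\rangle\langle b_j|)^{\otimes2}S_A\big]=\tr\!\big[\rho_{j,A}^2\big]$, the purity of the reduced state $\rho_{j,A}$ of $|b_j\rangle$ on region $A$, so it lies in $[0,1]$. Summing over $j$ and over the $2^n$ subsets $A$, bounding each purity by $1$, and using $D=d^n$, we get $\mu\le 2^n d^n/\big(d(d+1)\big)^n=\big(2/(d+1)\big)^n=e^{-\Omega(n)}$ since $d\ge2$; this is the first inequality.

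For the ``furthermore'' part, Markov's inequality gives $\Pr_{\psi\in P}\big(\sum_j|\langle\psi|b_j\rangle|^4\ge\sqrt\mu\big)\le\sqrt\mu=e^{-\Omega(n)}$, so with probability $1-e^{-\Omega(n)}$ the sum is at most $\sqrt\mu=e^{-\Omega(n)}$, which is Eq.~(\ref{mar}); taking $\{|b_j\rangle\}$ to be the energy eigenbasis and combining with Lemma \ref{eff} then yields Theorem \ref{thm:var}. I do not anticipate a real obstacle: the only points requiring care are the bookkeeping of the tensor-factor reordering when pairing the two copies site by site and the trivial check that $2/(d+1)<1$; one could even record the exact identity $\mu=\big(d(d+1)\big)^{-n}\sum_{j,A}\tr[\rho_{j,A}^2]$, but the crude per-term bound already suffices.
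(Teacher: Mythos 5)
Your proof is correct and follows essentially the same route as the paper: compute the Haar second moment of a product state site by site, then apply Markov's inequality. The only cosmetic difference is in the final bounding step---the paper uses $\Pi^{\mathrm{sym}}_d\le I$ directly (so that $\big(2\Pi^{\mathrm{sym}}_d/(d(d+1))\big)^{\otimes n}\le\big(2/(d(d+1))\big)^n I$ and the sum over $j$ contributes a factor of $D=d^n$), whereas you expand $\bigotimes_l(I+S_l)$ into $2^n$ partial swaps, identify each term as a purity in $[0,1]$, and count; these are equivalent bookkeeping choices yielding the same bound $\big(2/(d+1)\big)^n$.
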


\begin{proof}
Let $|\psi\rangle=\bigotimes_{l=1}^n|\psi_l\rangle$, where each $|\psi_l\rangle$ is an independent Haar-random state. We remind the reader of a well-known result. See, e.g., Ref. \cite{Har13} for its proof.
\begin{lemma}
\begin{equation}
\E_{\psi_l}((|\psi_l\rangle\langle\psi_l|)^{\otimes2})=\frac{2\Pi^{\rm sym}_d}{d(d+1)},
\end{equation}
where $\Pi^{\rm sym}_d=(\Pi^{\rm sym}_d)^2$ is the projection onto the symmetric subspace of $2$ qu\emph{d}its.
\end{lemma}

Using this lemma and the fact that $\Pi^{\rm sym}_d\le I$ (the identity matrix),
\begin{multline}
\E_{\psi\in P}\sum_j|\langle\psi|b_j\rangle|^4=\E_{\psi\in P}\sum_j\langle b_j|^{\otimes2}(|\psi\rangle\langle\psi|)^{\otimes2}|b_j\rangle^{\otimes2}=\sum_j\langle b_j|^{\otimes2}\E_{\psi\in P}((|\psi\rangle\langle\psi|)^{\otimes2})|b_j\rangle^{\otimes2}\\
=\sum_j\langle b_j|^{\otimes2}\left(\bigotimes_{l=1}^n\E_{\psi_l}((|\psi_l\rangle\langle\psi_l|)^{\otimes2})\right)|b_j\rangle^{\otimes2}=\sum_j\langle b_j|^{\otimes2}\left(\frac{2\Pi^{\rm sym}_d}{d(d+1)}\right)^{\otimes n}|b_j\rangle^{\otimes2}\le\left(\frac{2}{d+1}\right)^n.
\end{multline}
The probabilistic bound (\ref{mar}) follows from Markov's inequality.
\end{proof}

\section*{Acknowledgments}

Y.H. would like to thank Fernando G.S.L. Brand\~ao, Xie Chen, and Yong-Liang
Zhang for collaboration on related projects \cite{HZC17, HBZ19}. We are
especially grateful to F.G.S.L.B. for inspiring this work: In March 2017, he
suggested using the weak eigenstate thermalization hypothesis \cite{Mor16} to
study the late-time behavior of out-of-time-ordered correlators in
translation-invariant systems. Y.H. was supported by NSF grant PHY-1818914 and a
Samsung Advanced Institute of Technology Global Research Partnership.  AWH was
funded by NSF grants CCF-1452616, CCF-1729369, PHY-1818914 and ARO contract
W911NF-17-1-0433.

\appendix

\section{Spectrum} \label{app}

In this appendix, we consider the spectrum of a traceless TI $k$-local Hamiltonian $H$. Expanding $H$ in the generalized Pauli basis, we assign to $H_1$ all terms whose Pauli string operators start from the first site so that
\begin{equation}
H=\sum_{l=1}^{n}\T^{l-1}H_1\T^{-(l-1)}.
\end{equation}
The expansion of $H_1$ in the generalized Pauli basis \cite{BK08} has $(d^2-1)$ exactly $1$-local terms and $(d^2-1)^2d^{2j-4}$ exactly $j$-local terms for $j=2,3,\ldots,k$. The coefficients of the expansion parametrize $H_1$ and $H$. Thus, we have defined a parameter space $S$ of dimension
\begin{equation}
d^2-1+\sum_{j=2}^k(d^2-1)^2d^{2j-4}=(d^2-1)d^{2k-2}
\end{equation}
such that points in $S$ are in one-to-one correspondence to traceless TI $k$-local Hamiltonians.

\begin{conjecture} \label{conj1}
For any $k\ge2$, the set of TI $k$-local Hamiltonians with degenerate spectrum is of measure zero.
\end{conjecture}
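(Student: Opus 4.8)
The plan is to reduce Conjecture~\ref{conj1} to an existence statement and then attack that via the momentum decomposition. Fix $n$ and recall that the discriminant $\Delta(H):=\prod_{1\le i<j\le D}(E_i-E_j)^2$ of the characteristic polynomial of $H$ is a polynomial in the entries of $H$, hence — restricted to Hermitian matrices — a real polynomial $P_n$ on the parameter space $S\cong\mathbb R^{(d^2-1)d^{2k-2}}$; the spectrum of $H$ is degenerate exactly when $P_n(H)=0$. By the standard fact that the zero set of a not-identically-zero polynomial on $\mathbb R^N$ has Lebesgue measure zero, it suffices to show $P_n\not\equiv0$, i.e., that there exists at least one traceless TI $k$-local Hamiltonian on $n$ sites with non-degenerate spectrum. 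Doing this for every $n$ proves the conjecture (and, taking a countable union over $n$, also shows that the set of parameters giving a degenerate spectrum for \emph{some} $n$ is of measure zero). Since every $2$-local Hamiltonian is $k$-local, one may as well take $k=2$, the most constrained case.

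For the existence statement I would exploit that a TI Hamiltonian commutes with $\T$ and hence block-diagonalizes in the momentum basis, $H=\bigoplus_{m=0}^{n-1}H^{(m)}$. Its spectrum is non-degenerate iff (a) each block $H^{(m)}$ has non-degenerate spectrum and (b) no two blocks $H^{(m)},H^{(m')}$ with $m\ne m'$ share an eigenvalue. Equivalently, $P_n(H)\ne0$ iff all the finitely many polynomials $\mathrm{disc}(H^{(m)})$ ($0\le m\le n-1$) and $\mathrm{Res}(\chi_{H^{(m)}},\chi_{H^{(m')}})$ ($m\ne m'$) are nonzero at $H$; so it suffices to show each of these is not identically zero on $S$ — equivalently, to exhibit for each $m$ one TI $k$-local $H$ with $H^{(m)}$ non-degenerate, and for each $m\ne m'$ one TI $k$-local $H$ with $H^{(m)},H^{(m')}$ spectrally disjoint.

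Without a locality constraint this is easy: the traceless $\T$-invariant Hermitian matrices are exactly $\bigoplus_m\{\text{Hermitian on sector }m\}$ cut by one trace condition, so one can pick each $H^{(m)}$ to be a generic (hence non-degenerate) Hermitian matrix and then shift the blocks by generic scalars to make the $n$ spectra pairwise disjoint; this already settles the case $k=n$ and, more generally, any regime where the $O(d^{2k-2})$ free parameters suffice to realize a non-degenerate representative inside each momentum sector. For $d=2$ one could instead try the Jordan--Wigner route: a TI free-fermion chain has eigenvalues built from subset sums of the single-particle dispersion values, and non-degeneracy then reduces to a combinatorial/number-theoretic question — roughly, whether a nonzero $\{-1,0,1\}$-vector of length $n$ can have all its Fourier coefficients at frequencies $|r|\le k-1$ vanish; if not, generic hopping amplitudes work.

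The main obstacle is precisely this last step in the genuinely local regime $k=O(1)$, $n\to\infty$: there $S$ has constant dimension while each momentum block has exponential size, so one is asking a constant-dimensional family of Hamiltonians to avoid exponentially many ``degeneracy'' hypersurfaces at once. A generic member of the family does avoid all of them unless one of those hypersurfaces is \emph{forced} to contain all of $S$ — that is, unless locality plus translation invariance secretly pins down a spectral degeneracy, or a cross-sector spectral coincidence, for \emph{every} such Hamiltonian. The only obvious mechanism, spatial reflection symmetry (which would identify sectors $m$ and $n-m$), is not shared by generic TI $k$-local Hamiltonians; ruling out all subtler ``accidental symmetries'' is the hard part, which is why only partial results — for large $d$, or for $k$ comparable to $n$, in the spirit of \cite{KLW15} — are currently available and the general statement remains conjectural.
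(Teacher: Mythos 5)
You are attempting to prove something that the paper itself does \emph{not} prove: the statement is labeled Conjecture~\ref{conj1}, and the appendix only reviews partial results from Keating \emph{et al.}\ \cite{KLW15} (a full proof for $d=2$ with $n$ an odd prime, plus a dichotomy for general $d,n$) and proves the analogous dichotomy, Lemma~\ref{sympoly}, for degenerate \emph{gaps}. So there is no ``paper proof'' to match your attempt against.

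That said, your reduction step is both correct and exactly the mechanism behind the known partial results. The discriminant $P_n=\prod_{i<j}(E_i-E_j)^2$ is a symmetric polynomial in the eigenvalues, hence by the fundamental theorem of symmetric polynomials a polynomial in $\tr(H^\ell)$, hence (since the parametrization of $S$ by Pauli coefficients is linear) a real polynomial on $S$; its zero set has measure zero unless $P_n\equiv0$. This is precisely the content of the Keating \emph{et al.}\ dichotomy quoted in the appendix, and the same argument the paper uses to prove Lemma~\ref{sympoly}. What you correctly and honestly flag as ``the hard part'' — exhibiting, for each $n$, a single TI $2$-local Hamiltonian with non-degenerate spectrum, thereby ruling out the $P_n\equiv0$ branch — is exactly what remains open and is exactly where Keating \emph{et al.}\ needed $d=2$ and $n$ an odd prime. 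So your proposal is a faithful re-derivation of the dichotomy plus an accurate diagnosis of the gap; it is not a proof, which is consistent with the statement's status as a conjecture.

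Two smaller cautions. First, your characterization of the known partial results (``for large $d$, or for $k$ comparable to $n$'') does not match the paper's account, which attributes the proved case to $d=2$ and $n$ an odd prime. Second, the Jordan--Wigner route for $d=2$ is more delicate than your sketch suggests: the generic TI spin chain maps under Jordan--Wigner with periodic boundary conditions to a fermion chain with a parity-dependent boundary twist, and the resulting quadratic models typically carry a particle--hole structure that forces exact spectral coincidences; so restricting to the free-fermion submanifold of $S$ may land you \emph{inside} one of the degeneracy hypersurfaces rather than generically off it. None of this invalidates the reduction, but it underscores that the existence step is not merely a matter of turning the crank on a convenient exactly solvable family.
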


\begin{conjecture}
For any $k\ge2$, the set of TI $k$-local Hamiltonians whose spectrum has degenerate gaps is of measure zero.
\end{conjecture}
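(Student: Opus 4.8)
\emph{Proof proposal.} The plan is to reduce the statement, for each fixed system size $n$, to the non-vanishing of a single polynomial on the parameter space $S$ of traceless TI $k$-local Hamiltonians introduced above; the conjectured exceptional set is then a countable union over $n$ of measure-zero sets, hence of measure zero. The entries of $H=\sum_{l=1}^n\T^{l-1}H_1\T^{-(l-1)}$ depend affinely on the coordinates of $S$. Consider the $D^2\times D^2$ Hermitian operator $G:=H\otimes I-I\otimes H$, whose eigenvalues are precisely the gaps $\{E_j-E_k\}_{j,k}$. The $D$ vectors $|j\rangle\otimes|j\rangle$ always lie in the kernel of $G$, so its characteristic polynomial factors as $\chi_G(x)=x^D p(x)$ with $p$ monic of degree $D^2-D$ and with coefficients that are polynomials in the coordinates of $S$. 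Using that non-degenerate gaps forces a non-degenerate spectrum (valid for $D\ge3$), one checks that $H$ has non-degenerate gaps if and only if $p$ has $D^2-D$ distinct roots, all nonzero; equivalently, $Q:=\operatorname{disc}(p)\cdot p(0)$ is nonzero, where $\operatorname{disc}(p)$ is the discriminant of $p$ and $p(0)$ its constant term. Hence the set of TI $k$-local Hamiltonians on $n$ sites with degenerate gaps is contained in the real-algebraic set $\{Q=0\}\subset S$, which has Lebesgue measure zero unless $Q$ vanishes identically on $S$. (The same argument, with $\operatorname{disc}(\chi_H)$ in place of $Q$, would settle Conjecture \ref{conj1}.)

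It therefore remains to show $Q\not\equiv0$ on $S$, i.e., to exhibit for every $n$ a single TI Hamiltonian whose spectrum has non-degenerate gaps; it suffices to do this for $k=2$, padding with zero couplings when $k>2$, and for each local dimension $d$. A natural family to analyze is $H=\sum_{l=0}^{n-1}\T^l h\T^{-l}$ with $h$ a generic single-site plus nearest-neighbor term. The key structural point is that $h$ must include a term with nonzero imaginary part---e.g.\ a $Y$-type generalized Pauli---so that the momentum sectors of $H$ labeled by $q$ and $-q$ are not related by complex conjugation; otherwise the spectrum is forced to be degenerate, as it is for every real symmetric TI Hamiltonian. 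Within a fixed momentum sector the eigenvalues should be simple for generic couplings, and the remaining work is to rule out coincidences of eigenvalues, and of gaps, across the $n$ sectors---for generic couplings the offending parameter values are expected to form a proper subvariety of $S$, which a suitable dimension count should confirm.

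The main obstacle is precisely this last step carried out uniformly in $n$. The spectrum of $H$ lies in an interval of width $O(n)$ yet contains $D=d^n$ eigenvalues and $D(D-1)$ nonzero gaps, so the typical gap spacing is exponentially small; a naive perturbative argument starting from an exactly solvable or diagonal point is fragile, since near-degeneracies can recombine under the perturbation, and one must instead control all $D(D-1)$ gaps simultaneously. This is why only partial results are presently known (the rigorous results of Keating \emph{et al.} \cite{KLW15} are of this flavor). A complete proof would likely require either an explicit structured family whose entire many-body gap spectrum admits an algebraic description---for instance one built from a free-fermion dispersion with number-theoretically generic single-particle energies, with weak interactions added to lift the residual accidental degeneracies---or a judiciously chosen factor of $Q$ that can be certified nonzero by evaluating it at a convenient (possibly complex, non-Hermitian) point of the Zariski closure of $S$ and then deforming back. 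We leave this open.
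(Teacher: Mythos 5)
You are attempting to prove \emph{Conjecture~2}, which the paper itself leaves open. What the paper actually proves is Lemma~\ref{sympoly}, which is exactly the dichotomy step of your argument: they set $G:=\prod_{\{p,q\}\neq\{r,s\}}(E_p+E_q-E_r-E_s)$, observe it is a symmetric polynomial in the eigenvalues, invoke the fundamental theorem of symmetric polynomials to write it as a polynomial in the power sums $F_l=\tr(H^l)$, and note that each $F_l$ is a polynomial function on the parameter space $S$; hence the degenerate-gaps locus is the zero set of a polynomial on $S$ and is either measure zero or all of $S$. Your construction---characteristic polynomial of $H\otimes I-I\otimes H$ factored as $x^D p(x)$, with $Q:=\operatorname{disc}(p)\cdot p(0)$---is a valid and arguably slightly slicker packaging of the same reduction, since the coefficients of $p$ are visibly polynomial in the entries of $H$ (hence in the coordinates of $S$) and you avoid appealing to the symmetric-polynomial theorem. (Two small nits: $\operatorname{disc}(p)\neq 0$ alone already forces all roots nonzero for $D\ge 2$, since $E_j-E_k$ and $E_k-E_j$ would both vanish; and the ``$D\ge3$'' caveat for ``non-degenerate gaps implies non-degenerate spectrum'' is unnecessary---it holds for all $D\ge2$.)

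The genuine gap, which you identify and candidly leave open, is exhibiting a single TI $k$-local witness $H$ on every system size $n$ whose gaps are certifiably all distinct, i.e., showing $Q\not\equiv 0$. This is also precisely where the paper stops: the remark following Lemma~\ref{sympoly} states that rigorously verifying non-degenerate gaps for even one concrete TI $2$-local Hamiltonian is not easy, and the Keating--Linden--Wells results \cite{KLW15} the paper cites address only the non-degenerate-\emph{spectrum} question, and only for $d=2$ with $n$ an odd prime. Your observation that a real symmetric TI Hamiltonian is automatically degenerate (complex conjugation swaps momentum sectors $q\leftrightarrow -q$) is correct and a useful constraint on any would-be witness, but it does not close the gap. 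So the verdict is: your reduction is sound and essentially coincides with the paper's own partial progress, but what you have is not a proof of the conjecture---nor does the paper claim to have one.
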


Keating {\it et al.} \cite{KLW15} proved Conjecture \ref{conj1} for $d=2$ and when the system size $n$ is an odd prime. For other values of $d$ or $n$, they proved a weaker result.

\begin{lemma} [\cite{KLW15}]
For any $k$, one of the following must be true:
\begin{itemize}
\item The set of TI $k$-local Hamiltonians with degenerate spectrum is of measure zero.
\item All TI $k$-local Hamiltonians have degenerate spectrum.
\end{itemize}
\end{lemma}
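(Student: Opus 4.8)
The plan is to observe that ``having degenerate spectrum'' is the vanishing of a single polynomial in the parameters, so the dichotomy reduces to the elementary fact that the real zero set of a polynomial is either everything or a set of Lebesgue measure zero. First I would note that adding a multiple of the identity to $H$ shifts every eigenvalue by the same amount and hence does not affect whether the spectrum is degenerate; so it suffices to work with traceless Hamiltonians, which by the discussion above are in one-to-one correspondence with points of the parameter space $S\cong\mathbb R^N$, $N=(d^2-1)d^{2k-2}$, via the coefficients of the generalized Pauli expansion. The map from $S$ to the $D\times D$ matrix $H$ (with $D=d^n$) is real-linear, since $H=\sum_{l=1}^n\T^{l-1}H_1\T^{-(l-1)}$ and $H_1$ depends linearly on the Pauli coefficients.

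Next I would form the characteristic polynomial $p_H(\lambda)=\det(\lambda I-H)$. Its coefficients are, up to sign, the elementary symmetric functions of the eigenvalues, and they are polynomials with integer coefficients in the entries of $H$ (sums of principal minors). Composing with the linear parametrization, each coefficient of $p_H$ is a polynomial in the coordinates on $S$. Consequently the discriminant
\begin{equation}
\Delta(H):=\prod_{i<j}(E_i-E_j)^2,
\end{equation}
which equals, up to sign, the resultant of $p_H$ and $p_H'$ and hence is a universal integer polynomial in the coefficients of $p_H$, defines a polynomial function $f:S\to\mathbb R$. By construction, $H$ has degenerate spectrum if and only if $f$ vanishes at the corresponding point of $S$.

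Finally I would invoke the dichotomy for polynomials. If $f$ is identically zero on $S$, then every traceless TI $k$-local Hamiltonian---and hence, by the shift reduction, every TI $k$-local Hamiltonian---has degenerate spectrum, which is the second alternative. Otherwise $f$ is a nonzero polynomial on $\mathbb R^N$, and the standard fact that the real zero set of such a polynomial has Lebesgue measure zero (proved by induction on $N$ using Fubini, together with the observation that a nonzero univariate polynomial has finitely many roots) shows that the set of traceless TI $k$-local Hamiltonians with degenerate spectrum has measure zero; the shift reduction transfers this to the non-traceless case as well, yielding the first alternative.

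There is no serious obstacle here: the only real content is recognizing that degeneracy is an algebraic condition, i.e.\ that the discriminant is polynomial in the parameters. The points needing (minor) care are checking that the parametrization-to-matrix map is genuinely polynomial---it is in fact linear---so that composition preserves polynomiality, and invoking correctly the elementary measure-theoretic lemma on zero sets of nonzero polynomials. (Deciding \emph{which} of the two alternatives actually holds for given $d,n,k$---for instance the result of \cite{KLW15} that the first holds when $d=2$ and $n$ is an odd prime---is a genuinely harder question that this dichotomy does not resolve.)
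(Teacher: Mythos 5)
Your proof is correct and takes essentially the same approach as the paper's proof of the analogous Lemma~\ref{sympoly} on degenerate gaps (the paper cites \cite{KLW15} for the present lemma without reproducing a proof). The only cosmetic difference is the intermediate step used to see that the degeneracy condition is polynomial on $S$: you realize the discriminant as the resultant of the characteristic polynomial and its derivative, hence a universal polynomial in the characteristic-polynomial coefficients, which are in turn polynomials in the (linearly parametrized) matrix entries, whereas the paper invokes the fundamental theorem of symmetric polynomials to express the corresponding product of eigenvalue differences as a polynomial in the power sums $\tr(H^l)$; both routes are standard and equivalent, and your identity-shift reduction to the traceless case is a harmless extra precaution.
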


We extend this lemma to the case of non-degenerate gaps.

\begin{lemma} \label{sympoly}
For any $k$, one of the following must be true:
\begin{itemize}
\item The set of TI $k$-local Hamiltonians whose spectrum has degenerate gaps is of measure zero.
\item The spectrum of all TI $k$-local Hamiltonians has degenerate gaps.
\end{itemize}
\end{lemma}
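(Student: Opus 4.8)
The plan is to realize the set of TI $k$-local Hamiltonians whose spectrum has degenerate gaps as the zero locus of a single real polynomial on the parameter space $S$, and then to invoke the standard fact that a not-identically-zero polynomial on $\mathbb R^N$ vanishes only on a set of Lebesgue measure zero. Since adding a multiple of the identity to $H$ shifts every eigenvalue by the same constant and hence leaves all differences $E_j-E_k$ unchanged, it suffices to work with traceless Hamiltonians, i.e., with the parameter space $S\cong\mathbb R^N$, $N=(d^2-1)d^{2k-2}$, introduced above; the coordinates of $S$ are the generalized-Pauli coefficients, and the entries of $H$, viewed as a $D\times D$ matrix, are real-linear functions of these coordinates.

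Next I would construct the polynomial. Consider the Hermitian matrix $M:=H\otimes I-I\otimes H$ on $\mathbb C^D\otimes\mathbb C^D$, whose eigenvalues are exactly the gaps $E_j-E_k$ ranging over all ordered pairs $(j,k)$. The $D$ orthonormal vectors $|j\rangle\otimes|j\rangle$ lie in $\ker M$ for every $H$, so the coefficients of $x^0,\dots,x^{D-1}$ in the characteristic polynomial $\chi_M(x)$ vanish for all parameters; being polynomials in the coordinates of $S$, they vanish identically, and we may write $\chi_M(x)=x^D q(x)$ with $q(x)=\prod_{j\neq k}\bigl(x-(E_j-E_k)\bigr)$ monic of degree $D(D-1)$ and with coefficients that are real polynomials on $S$. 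I then set $P:=\operatorname{disc}_x q$, the discriminant of $q$ with respect to $x$; since the discriminant of a monic polynomial is a polynomial in its coefficients, $P$ is a real polynomial on $S$.

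It remains to check that $P$ vanishes at a point of $S$ exactly when the corresponding Hamiltonian has degenerate gaps in the sense of the paper. If some $E_j=E_k$ with $j\neq k$, then $0$ is a root of $q$ of multiplicity at least two (the pairs $(j,k)$ and $(k,j)$ both contribute), so $P=0$; and these same two pairs, which are distinct and each have unequal indices, witness degenerate gaps. If the spectrum is non-degenerate, then each root of $q$ arises from a unique ordered pair with distinct indices, so $q$ has a repeated root precisely when $E_j-E_k=E_{j'}-E_{k'}$ for some $(j,k)\neq(j',k')$ with $j\neq k$ and $j'\neq k'$, which is exactly the definition of degenerate gaps; in both cases this is equivalent to $P=0$. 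Hence the degenerate-gap locus in $S$ is the zero set of $P$, and the dichotomy is immediate: either $P\equiv0$, and then the spectrum of every TI $k$-local Hamiltonian has degenerate gaps, or $P\not\equiv0$, and then its zero set---hence the degenerate-gap locus---has Lebesgue measure zero.

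The only point requiring care is the bookkeeping around the eigenvalue $0$ of $M$: one must factor out exactly $x^D$ and no more---the additional power of $x$ produced by genuine spectral degeneracies has to remain inside $q$, since detecting it is part of what $P$ is for---and one must note that this partial factorization is a polynomial identity, not merely a pointwise equality. Beyond that the argument is routine. Polynomiality is in fact a convenience rather than a necessity: the gap data depends real-analytically on the coordinates of $S$, and the zero set of a not-identically-zero real-analytic function already has measure zero, so one could replace $\operatorname{disc}_x q$ by any less explicit analytic obstruction. This is the exact analogue, for gaps rather than eigenvalues, of the dichotomy for non-degenerate spectra in Ref.~\cite{KLW15}, with $\operatorname{disc}_x\chi_H$ there replaced by $\operatorname{disc}_x q$ here.
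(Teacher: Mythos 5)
Your argument is correct, and it reaches the conclusion by a genuinely different route from the paper. The paper works directly with the scalar obstruction $G:=\prod_{\{p,q\}\neq\{r,s\}}(E_p+E_q-E_r-E_s)$, observes that it is a symmetric polynomial in the eigenvalues, and then invokes the fundamental theorem of symmetric polynomials to rewrite $G$ as a polynomial in the power sums $F_l=\tr(H^l)$, each of which is manifestly a polynomial function on the parameter space $S$. You instead lift the gap structure into linear algebra: the matrix $M=H\otimes I-I\otimes H$ has spectrum $\{E_j-E_k\}_{j,k}$, the $D$-dimensional kernel spanned by $|j\rangle\otimes|j\rangle$ lets you factor $\chi_M(x)=x^D q(x)$ as a polynomial identity on $S$ (the low-order coefficients vanish identically, since they vanish pointwise and are themselves polynomials in the parameters), and then $P=\operatorname{disc}_x q$ is your polynomial on $S$. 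Both proofs are sound, and both are careful where care is needed: the paper must note that $G$ is symmetric (so the fundamental theorem applies), while you must note that exactly $x^D$ factors out and that this is an identity of polynomials, not just of values. The paper's construction is more direct and a bit shorter; yours avoids the fundamental theorem of symmetric polynomials in favor of standard facts about determinants and discriminants, which makes the polynomiality of $P$ immediate, and it lines up more transparently with the non-degenerate-spectrum dichotomy in \cite{KLW15}, which rests on $\operatorname{disc}_x\chi_H$. One could also note that your $P$ and the paper's $G$ encode essentially the same factors $E_j+E_{k'}-E_k-E_{j'}$ (the discriminant of $q$ is a product of squared differences of roots), so the two polynomials differ only in indexing and multiplicity; what differs is the machinery used to see that each is a polynomial on $S$.
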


\begin{proof}
Let
\begin{equation}
G:=\prod_{p,q,r,s:\{p,q\}\neq\{r,s\}}(E_p+E_q-E_r-E_s)
\end{equation}
so that $G=0$ if and only if the spectrum of a TI $k$-local Hamiltonian $H$ has degenerate gaps. It is easy to see that $G$ is a symmetric polynomial in $E_1,E_2,\ldots,E_D$. The fundamental theorem of symmetric polynomials implies that $G$ can be expressed as a polynomial in $F_1,F_2,\ldots$, where
\begin{equation}
F_l:=\sum_jE_j^l=\tr(H^l).
\end{equation}
Expanding $H^l$ in the generalized Pauli basis and taking the trace, we see that $F_l$ and hence $G$ are polynomial functions from $S$ (the parameter space) to $\mathbb R$. We complete the proof by noting that the zeros of the multivariable polynomial $G:S\to\mathbb R$ are of measure zero unless $G$ is identically zero.
\end{proof}

\begin{remark}
Since a $2$-local Hamiltonian can be artificially regarded as $k$-local for any $k\ge2$, it suffices to give only one counterexample of a $2$-local TI Hamiltonian to exclude the second possibility in Lemma \ref{sympoly} for any $k\ge2$. However, it is not easy to rigorously verify that the spectrum of a particular TI $2$-local Hamiltonian has non-degenerate gaps.
\end{remark}

\bibliographystyle{abbrv}
\bibliography{instability}

\end{document}